\documentclass{birkjour}
\usepackage [latin1]{inputenc}

%
%
\newtheorem{theorem}{Theorem}[section]
\newtheorem{corollary}[theorem]{Corollary}
\newtheorem{lemma}[theorem]{Lemma}

\theoremstyle{definition}

\theoremstyle{remark}
\newtheorem{remark}[theorem]{Remark}

\numberwithin{equation}{section}

\newcommand{\BibTeX}{B\kern-0.1emi\kern-0.017emb\kern-0.15em\TeX}
\newcommand{\XYpic}{$\mathrm{X\kern-0.3em\raisebox{-0.18em}{Y}}$-$\mathrm{pic}\,$}

\newcommand{\cl}{C \kern -0.1em \ell}  



%

\newcommand{\ed}{\end{document}}

\begin{document}

\setcounter{page}{1} \setlength{\unitlength}{1mm}\baselineskip
.58cm \pagenumbering{arabic} \numberwithin{equation}{section}

\title[Perfect fluid spacetimes ]
{Perfect fluid spacetimes and gradient solitons}

\author[ K. De, U. C. De, A. A. Syied, N. B. Turki and S. Alsaeed ]
{ Krishnendu De$^{*}$, Uday Chand De, Abdallah Abdelhameed Syied, Nasser Bin Turki and Suliman Alsaeed }

\address
 {Department of Mathematics,
 Kabi Sukanta Mahavidyalaya,
The University of Burdwan.
Bhadreswar, P.O.-Angus, Hooghly,
Pin 712221, West Bengal, India. ORCID iD: https://orcid.org/0000-0001-6520-4520}
\email{krishnendu.de@outlook.in }

\address
{Department of Mathematics, University of Calcutta, West Bengal, India. ORCID iD: https://orcid.org/0000-0002-8990-4609}
\email {uc$_{-}$de@yahoo.com}
\address{ Department of Mathematics, Faculty of Science,
Zagazig \\
University, Egypt,}
\email{a.a\_syied@yahoo.com}

\address{Department of Mathematics, College of Science, King Saud
University, P.O. Box 2455,Riyadh 11451, Saudi Arabia, }
\email{nassert@ksu.edu.sa}
\address{Department of Mathematics, Applied Science College, Umm Al-Qura University, P.O.
Box 715, 21955 Makkah,, Saudi Arabia, }
\email{sasaeed@uqu.edu.sa}

\footnotetext {$\bf{2020\ Mathematics\ Subject\ Classification\:}.$ 53C50, 53E20, 53C35, 53E40.
\\ {Key words and phrases: Perfect fluid spacetimes, gradient Ricci solitons, gradient Yamabe solitons, $m$-quasi Einstein solitons.\\
\thanks{$^{*}$ Corresponding author}
}}
\maketitle

\vspace{1cm}

\begin{abstract}
This article deals with the investigation of perfect fluid spacetimes endowed with concircular vector field. It is shown that in a perfect fluid spacetime with concircular vector field, the velocity vector field annihilates the conformal curvature tensor and in dimension 4, a perfect fluid spacetime is a generalized Robertson-Walker spacetime with Einstein fibre. Moreover, we prove that if a perfect fluid spacetime equipped with concircular vector field admits a second order symmetric parallel tensor, then either the state equation of the perfect fluid spacetime is characterized by $p=\frac{3-n}{n-1}\sigma$ , or the tensor is a constant multiple of the metric tensor. We also characterize the perfect fluid spacetimes with concircular vector field whose Lorentzian metrics are Ricci soliton, gradient Ricci soliton, gradient Yamabe solitons and gradient $m$-quasi Einstein solitons, respectively.
\end{abstract}

\maketitle

\section{Introduction}

Let $M^{n}$ be a Lorentzian manifold endowed with the Lorentzian metric $g$ of signature $(\underbrace{+, +, \ldots, + }_{(n-1)\text{times}}, -)$. The idea of generalized Robertson-Walker $(GRW)$ spacetimes was presented by Alias, Romero and Sanchez \cite{alias1} in $1995$. A Lorentzian manifold $M^{n}$ with $n \ge 3$ is named as a $GRW$ spacetime if it can be written as a warped product of an open interval $I$ of $\mathbb{R}$ (set of real numbers) and a Riemannian manifold $ \mathcal{M^{*}}$ of dimension $(n-1)$, that is, $M=-I \times  {\mathfrak{f}^2}  \mathcal{M^{*}}$, where $\mathfrak{f}>0$ is a smooth function, termed as scale factor or warping function. If the dimension of $\mathcal{M^{*}}$ is three and is of constant sectional curvature, then the spacetime reduces to Robertson-Walker $(RW)$ spacetime. Hence, the $GRW$ spacetime is a spontaneous extension of $RW$ spacetime on which the standard cosmology is modeled. It also includes the Einstein-de Sitter spacetime, the static Einstein spacetime, the Friedman cosmological models, the de Sitter spacetime, and have implementations as inhomogeneous spacetimes obeying an isotropic radiation. The geometrical and physical features of $GRW$ spacetimes have been exhaustively presented in ( \cite{bychen}, \cite{survey}).\par

A Lorentzian manifold $M^n$ is called a perfect fluid spacetime if its non-vanishing Ricci tensor $S$ obeys
\begin{equation}
\label{1}
S=\alpha g+\beta A \otimes A,
\end{equation}
where $\alpha$, $\beta$ are scalar fields (not simultaneously zero), $\rho$ is a vector field defined by $g(X, \rho)=A(X)$ for all $X$. Also, $\rho$ is the unit timelike vector field (also named velocity vector field) of the perfect fluid spacetime. Every $RW$ spacetime is a perfect fluid spacetime \cite{neil}, where as in $4-$ dimension, the $GRW$ spacetime is a perfect fluid spacetime if and only if it is a $RW$ spacetime. In differential geometry, the Ricci tensor obeying equation (\ref{1}) is termed as a quasi Einstein manifold \cite{chaki}. For more details, we refer (\cite{blaga2}, \cite{sharma1}) and the references therein.\par

The problem of discovering a canonical metric on a smooth manifold inspires Hamilton \cite{rsh2} to introduce the concept of Ricci flow. If the metric of a {(semi-) Riemannian} manifold $M^n$ is satisfied by an evolution equation $\frac{\partial}{\partial t}g_{ij}(t)=-2S_{ij}$, then it is called Ricci flow \cite{rsh2}. The self-similar solutions to the Ricci flow yield the Ricci solitons. A metric of $M^n$ is called a Ricci soliton \cite{rsh1} if it obeys
\begin{equation}
\label{4}
\mathfrak{L}_{W}g+2S+2\lambda g=0
\end{equation}
for some real scalar $\lambda$. Here $\mathfrak{L}_{W}$ indicates the Lie derivative operator. We indicate $(g, W, \lambda)$ as a Ricci soliton on $M^n$. If $\lambda$ is negative, positive or zero, then the Ricci soliton is said to be shrinking, expanding or steady, respectively. In particular, if $W$ is Killing or identically zero, then the Ricci soliton is trivial and $M^n$ is Einstein.
Also, if the soliton vector $W$ is the gradient of some smooth function $-f$, that is, $W=-Df$, then equation (\ref{4}) takes the form
\begin{equation}
\label{int1.2}
Hess \, f-S-\lambda g=0,
\end{equation}
where $Hess$ and $D$ indicates the Hessian and the gradient operator of $g$ respectively. The metric obeying equation (\ref{int1.2}) is called a gradient Ricci soliton. The smooth function $-f$ is said to be the potential function of the gradient Ricci soliton.\par

Inspired by the Yamabe's conjecture (``metric of a complete Riemannian manifold is conformally related to a metric with constant scalar curvature"), the idea of Yamabe flow on a complete Riemannian manifold $M^n$ was presented by Hamilton \cite{rsh2}. A semi-Riemannian manifold $M^n$ endowed with a semi-Riemannian metric $g$ is called a Yamabe flow if it obeys:
\begin{equation*}
\frac{\partial}{\partial t}g(t)=-r g(t), \,\,\,\,\, g_{0}=g(t),
\end{equation*}
where $t$ indicates the time and $r$ is the scalar curvature of $M$. A semi-Riemannian manifold $M^n$ equipped with a semi-Riemannian metric $g$ is called a Yamabe soliton if it obeys
\begin{equation}
\label{yb1}
\frac{1}{2} \mathcal{L}_{W}g=(r-\lambda)g
\end{equation}
for real constant $\lambda : M \rightarrow \mathbb{R}$. Here $\mathcal{L}$ indicates the Lie derivative operator, $W$ is a vector field, termed as the potential vector field and $\mathbb{R}$ is the set of real numbers. Yamabe soliton with $W=Df$ reduces to the gradient Yamabe soliton on semi-Riemannian manifold $M^n$. Thus, equation (\ref{yb1}) takes the form
\begin{equation}
\label{yb2}
 Hess f=(r-\lambda)g.
\end{equation}
If $f$ is constant (or $W$ is Killing) on $M$, then gradient Yamabe (or Yamabe) soliton becomes trivial. Sharma \cite{sharma1} investigated the Yamabe soliton on $3$-Sasakian manifolds. Also, the $3$-Kenmotsu
manifolds and almost co-K\"{a}hler manifolds with Yamabe solitons have been characterized  by Wang \cite{yw1} and Suh and De \cite{suh2} respectively. Chen et al. \cite{chen1} studied the properties of Riemannian manifolds with Yamabe solitons. Some interesting results on Yamabe solitons have been investigated in (\cite{blaga2}, \cite{kde1}, \cite{kde2}) and also by others. Recently, in \cite{ude}, the authors have studied Yamabe and gradient Yamabe solitons in perfect fluid spacetimes.\par

A semi-Riemannian manifold $M^n$ equipped with the semi-Riemannian metric $g$ is said to be a \emph{gradient m-quasi Einstein metric} \cite{br1} if there exists a constant $\lambda$, a smooth function $f:M^{n}\rightarrow \mathbb{R}$ and obeys
\begin{equation} S+Hess f=\frac{1}{m}df \otimes df +\lambda g,\label{a4}\end{equation}
where $0 < m\leq \infty $ is an integer and $\otimes$ indicate the tensor product. In this case $f$ denotes the $m$-quasi Einstein potential function \cite{br1}. Here the Bakry-Emery Ricci tensor $S+Hess f-\frac{1}{m}df \otimes df$ is proportional to the metric $g$ and $\lambda = constant$ \cite{ww}.

If $m = \infty$, the foregoing equation (\ref{a4}) represents a gradient Ricci soliton and the metric represents almost gradient Ricci soliton if it obeys the condition $m = \infty$ and $\lambda$ is a smooth function. Some basic classifications of $m$-quasi Einstein metrics was characterized by He et al. \cite{he}  on Einstein product manifold with non-empty base. Also, a few characterization of $m$-quasi Einstein solitons have been presented (in details) in \cite{hu}.\par

The above studies motivate us to study the properties of perfect fluid spacetimes if the Lorentzian metrics are Ricci, gradient Ricci, gradient Yamabe and $m$-quasi Einstein solitons. We lay out the content of our paper as:\par

In Section $2$, we produce the preliminaries idea of perfect fluid spacetime with concircular vector field. The properties of second order symmetric parallel tensor in perfect fluid spacetimes with concircular vector field are studied in Section $3$. Section $4$ concerns with Ricci soliton and gradient Ricci soliton on a perfect fluid spacetime with concircular vector field. We investigate the properties of perfect fluid  spacetimes equipped with gradient yamabe soliton and gradient $m$-quasi Einstein solitons in Section $5$ and Section $6$, respectively.

\section{Perfect fluid  spacetime}
 It is well known that in a perfect fluid spacetimes, $\rho$ is the unit timelike vector field (also termed as velocity  vector field ), hence
\begin{equation}
\label{2}
g(U, \rho)=A(U),\,\,\,\,\,\,\, g(\rho, \rho)=A(\rho)=-1,
\end{equation}
where the vector field $U \in \mathfrak{X}(M)$ ($\mathfrak{X}(M)$ indicates the collection of all $C^{\infty}$ vector fields of $M$) and $A$ indicates the $1$-form. Executing the covariant derivative of (\ref{2}) yields
\begin{equation}
\label{2.1a}
g(\nabla_{U}\rho, \rho)=0\,\,\, \text{and}\,\,\, (\nabla_{U}A)(\rho)=0,
\end{equation}
where $\nabla$ is the Levi-Civita connection.
The Einstein's field equations without cosmological constant have the form
\begin{equation}
\label{1.2}
S-\frac{r}{2}g=\kappa T,
\end{equation}
where $\kappa$ and $T$ indicates the gravitational constant and the energy momentum tensor, respectively. In case of perfect fluid spacetime, the \emph{energy momentum tensor} $T$ is defined as
\begin{equation}
\label{1.1}
T=(\sigma+p)A \otimes A+p g,
\end{equation}
where $\sigma$ indicates the energy density of the perfect fluid and $p$ is the isotropic pressure.

In a perfect fluid spacetime if we consider an orthonormal frame field and taking contraction of the equation (\ref{1}) over $U$ and $V$, we obtain
\begin{equation}
\label{1.4}
r=n \alpha-\beta.
\end{equation}
The necessary and sufficient condition for the constant scalar curvature of a perfect fluid spacetime is that $n U(\alpha)=U(\beta).$
Combining the equations (\ref{1}), (\ref{1.2}) and (\ref{1.1}), we infer that
\begin{equation}
\label{1.7}
\beta=\kappa(p+\sigma), \,\, \alpha=\frac{\kappa(p-\sigma)}{2-n}.
\end{equation}

Moreover, $p$ and $\sigma$ are interconnected by an equation of state of the form $p = p(\sigma )$ characterizing the particular sort of perfect fluid under consideration. In this instance, the perfect fluid is called isentropic. In addition, if $p = \sigma$, then the perfect fluid is named as stiff matter. Many years ago, a stiff matter equation of state was publicized by Zeldovich \cite{ze1}. The stiff matter era preceded the dust matter era with $p = 0$ , the radiation and it characterizes the early universe with $p -\frac{\sigma}{3}=0$ and the dark energy era with $p+\sigma =0$ \cite{ch1}.\par

The idea of concircular vector field was introduced by Failkow \cite{fi}. On a semi-Riemannian manifold $M$, a vector field $\rho$ is called concircular if there exists a smooth function $\mu$ (termed as potential function of the concircular vector field) such that
\begin{equation*}
\label{con1}
\nabla_{U}\rho=\mu U, \,\, \forall \,\, U\in \mathfrak{X}(M).
\end{equation*}
In \cite{ht}, we see that the world lines of receding or colliding galaxies in de Sitter's model of general relativity are trajectories of timelike concircular vector fields. Here, $\rho$ is called non-trivial if $\rho$ is non-constant. The vector field $\rho$ becomes concurrent vector field if $\mu$ is non-zero constant. In \cite{chen2015}, B. Y. Chen has investigated concircular vector fields and their applications to Ricci solitons. Also, Deshmukh et al.\cite{desh} have studied spheres and Euclidean spaces with the Concircular vector fields. For more information, see (\cite{des2018}, \cite{des2020}) and references contained in those.\par

Utilizing the equations (\ref{1}) and (\ref{2}), we find that
\begin{equation}
\label{11}
S(U, \rho)=(\alpha-\beta)A(U)
\end{equation}
and conclude that corresponding to the eigenvector $\rho$, $\alpha-\beta$ is an eigenvalue of the Ricci tensor.\par

{\bf Agreement:} Throughout the paper, in a perfect fluid spacetime, we consider the velocity vector field is of concircular type.\par

If the velocity vector field $\rho$ of the perfect fluid spacetime is a concircular vector field, then we have $$\nabla_{U} \rho=\mu U$$ for all $U$. Utilizing the above expression together with $R(U, V)\rho=\nabla_{U} \nabla_{V}\rho-\nabla_{V} \nabla_{U}\rho-\nabla_{[U, V]} \rho$ yield
\begin{equation}
\label{12}
R(U, V)\rho=(U \mu)V-(V\mu)U.
\end{equation}
Executing contraction over $U$, the foregoing equation gives
\begin{equation}
\label{13}
S(V, \rho)=(1-n)(V \mu).
\end{equation}
Combining the equations (\ref{11}) and (\ref{13}), we infer that
\begin{equation}
\label{14}
(U \mu)=\frac{\alpha-\beta}{1-n}A(U).
\end{equation}
Using (\ref{14}) into equation (\ref{12}), we find
\begin{equation}
\label{15}
R(U, V)\rho=\frac{\alpha-\beta}{1-n}\{A(U)V-A(V)U\}.
\end{equation}
Applying the above equation and from the subsequent expression of Weyl conformal curvature tensor \cite{hw}
\begin{eqnarray*}
\label{}
&&C(U, V)X=R(U, V)X-\frac{1}{n-2}[g(V, X)QU-g(U, X)QV+S(V, X)U\nonumber\\&&
\,\,\,\,\,\,\,\,\,\,\,\,\,\,\,\,\,\,\,\,\,\,\,\,\,\,\,\,\,\,-S(U, X)V-\frac{r}{n-1}\{g(V, X)U-g(U, X)V\}],
\end{eqnarray*}
where $R$ is the curvature tensor and Ricci operator $Q$ is defined by $g(QU, V)=S(U, V)$, we obtain that
\begin{equation*}
\label{}
C(U, V)\rho=0, \,\, \forall \,\,U, V \in \mathfrak{X}(M).
\end{equation*}
\begin{theorem}
\label{thm2.1}
In a perfect fluid spacetime with concircular vector field, the velocity vector field annihilates the conformal curvature tensor.
\end{theorem}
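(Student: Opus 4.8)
The plan is to reduce the curvature tensor acting on $\rho$ to the algebraic form characteristic of a constant-curvature space, and then to feed this expression into the definition of the Weyl tensor, where the perfect-fluid structure of the Ricci tensor forces every remaining term to cancel. Concretely, I would first exploit the concircular hypothesis $\nabla_{U}\rho=\mu U$ to pin down $R(U,V)\rho$ completely, and only afterwards substitute $X=\rho$ into the conformal curvature tensor $C$.

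First I would compute $R(U,V)\rho$ from $R(U,V)\rho=\nabla_{U}\nabla_{V}\rho-\nabla_{V}\nabla_{U}\rho-\nabla_{[U,V]}\rho$. Differentiating $\nabla_{V}\rho=\mu V$ yields equation (\ref{12}), namely $R(U,V)\rho=(U\mu)V-(V\mu)U$. Contracting over $U$ gives $S(V,\rho)=(1-n)(V\mu)$, which is (\ref{13}). Comparing this with the eigenvalue relation $S(U,\rho)=(\alpha-\beta)A(U)$ of (\ref{11}) determines the potential gradient as $(U\mu)=\frac{\alpha-\beta}{1-n}A(U)$, equation (\ref{14}); reinserting it into (\ref{12}) produces the clean expression
\[
R(U,V)\rho=\frac{\alpha-\beta}{1-n}\{A(U)V-A(V)U\},
\]
which is (\ref{15}).

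With (\ref{15}) in hand, I would set $X=\rho$ in the Weyl tensor formula. Here the perfect-fluid form (\ref{1}) is essential: it gives $QU=\alpha U+\beta A(U)\rho$, together with $g(V,\rho)=A(V)$ and $S(V,\rho)=(\alpha-\beta)A(V)$. The combination $A(V)QU-A(U)QV$ then sheds its $\beta$-contribution, since both $\beta$-terms carry the common factor $\beta A(U)A(V)\rho$ and cancel, leaving only $\alpha\{A(V)U-A(U)V\}$. The surviving Ricci and scalar-curvature terms are likewise proportional to the antisymmetric vector $A(V)U-A(U)V$.

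The step I expect to be the genuine bookkeeping obstacle is the coefficient matching inside the bracket of the Weyl formula. After collecting terms one is left with the scalar multiple $\alpha+(\alpha-\beta)-\frac{r}{n-1}$ of $\{A(V)U-A(U)V\}$; invoking the trace identity $r=n\alpha-\beta$ from (\ref{1.4}), this scalar collapses to $\frac{(\alpha-\beta)(n-2)}{n-1}$, so that the overall prefactor $\frac{1}{n-2}$ reproduces exactly the expression (\ref{15}) for $R(U,V)\rho$. Subtracting, every term cancels and we obtain $C(U,V)\rho=0$, establishing the theorem. The only real care required is to keep the antisymmetric combination $A(V)U-A(U)V$ and its sign consistent throughout the substitution.
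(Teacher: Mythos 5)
Your proposal is correct and follows exactly the paper's route: derive $R(U,V)\rho=\frac{\alpha-\beta}{1-n}\{A(U)V-A(V)U\}$ from the concircular condition via the contraction and eigenvalue relation, then substitute $X=\rho$ into the Weyl formula and verify the cancellation using $r=n\alpha-\beta$. In fact you supply the coefficient bookkeeping (the cancellation of the $\beta A(U)A(V)\rho$ terms and the collapse of the scalar to $\frac{(n-2)(\alpha-\beta)}{n-1}$) that the paper states only implicitly, and it checks out.
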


We know that \cite{love}, in dimension 4, $C(U,V)\rho=0$ is equivalent to $C(U,V)W=0$. Also, from $\nabla_{U} \rho=\mu U$, we get
$$(\nabla_{U} A)(V)=\mu g(U,V)=(\nabla_{V} A)(U).$$
Hence, the 1-form $A$ is closed.\par
In \cite{manticamolinaride}, Mantica et al proved that in a 4-dimensional perfect fluid spacetime satisfying $div \mathbf{C}=0$ is a $GRW$ spacetime with Einstein fibre, provided the velocity vector field $\rho$ is irrotational.\par
Since  $C(U,V)W=0$, then $div \mathbf{C}=0$. Hence, from the above discussion, we say that the perfect fluid spacetime with concircular vector field is a $GRW$ spacetime with Einstein fibre.
 Thus, we can write:
\begin{theorem}
\label{thm2.2}
A perfect fluid spacetime with concircular vector field is a $GRW$ spacetime with Einstein fibre.
\end{theorem}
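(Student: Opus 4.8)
The plan is to build directly on Theorem~\ref{thm2.1} and then reduce the statement to a known structure result for four-dimensional perfect fluid spacetimes. First I would upgrade the partial vanishing obtained in Theorem~\ref{thm2.1}, namely $C(U,V)\rho=0$, to full conformal flatness of the relevant type. In dimension $4$ the condition $C(U,V)\rho=0$ is equivalent to $C(U,V)W=0$ for all $U,V,W$ by the classical fact cited from \cite{love}; applying this equivalence promotes the single contraction with the velocity field into the vanishing of the whole Weyl tensor. From $C(U,V)W=0$ it is then immediate that the divergence of the Weyl tensor vanishes, $div\,\mathbf{C}=0$, which supplies the first of the two hypotheses I will need.

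Next I would verify the remaining hypothesis, that the velocity vector field $\rho$ is irrotational. Using the concircular condition $\nabla_{U}\rho=\mu U$ together with the defining relation $g(U,\rho)=A(U)$, a direct covariant differentiation yields
\begin{equation*}
(\nabla_{U}A)(V)=\mu\,g(U,V)=(\nabla_{V}A)(U),
\end{equation*}
so the covariant derivative of the $1$-form $A$ is symmetric, whence $A$ is closed and $\rho$ is irrotational. This is exactly the auxiliary assumption required by the structure theorem invoked below, and it follows for free from the concircularity agreement fixed throughout the paper.

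Finally I would appeal to the theorem of Mantica, Molinari and collaborators \cite{manticamolinaride}, which asserts that a four-dimensional perfect fluid spacetime satisfying $div\,\mathbf{C}=0$ with irrotational velocity vector field is a $GRW$ spacetime whose fibre is Einstein. Having established both $div\,\mathbf{C}=0$ (from the vanishing Weyl tensor) and the irrotationality of $\rho$ (from the concircular condition), all hypotheses of that theorem are met, and the desired conclusion follows at once.

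The main obstacle here is conceptual rather than computational: the substantive geometric content is carried by the two external results, namely the dimension-$4$ equivalence of $C(U,V)\rho=0$ with full conformal flatness and the Mantica--Molinari structure theorem. The real work is therefore to confirm that the concircular hypothesis supplies precisely the irrotationality condition their theorem demands and that Theorem~\ref{thm2.1} delivers exactly the curvature condition $div\,\mathbf{C}=0$. Once this matching of hypotheses is made explicit, the argument is essentially immediate.
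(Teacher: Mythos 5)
Your proposal follows the paper's own proof essentially step for step: invoking Theorem~\ref{thm2.1}, using the dimension-$4$ equivalence of $C(U,V)\rho=0$ with $C(U,V)W=0$ from \cite{love} to get $div\,\mathbf{C}=0$, deriving closedness of $A$ (irrotationality of $\rho$) from the concircular condition, and then applying the structure theorem of \cite{manticamolinaride}. The argument is correct and matches the paper's route, with the minor (accurate) caveat that both proofs implicitly require $n=4$ even though the theorem statement omits it.
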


\section{Second order symmetric parallel tensor in a perfect fluid spacetime}
Let $P $ be a symmetric (0,2)-tensor in a perfect fluid spacetime which is parallel with respect to the Levi-civita connection $\nabla$, that is $\nabla P =0$.
Then, by $\nabla P =0$, we get
\begin{equation} P (R(U,V)X,Y)+P(X,R(U,V)Y)=0,\label{c1}\end{equation}
where $U,V,X$ and $Y$ are arbitrary vectors fields.
As $P$ is symmetric , putting $=X=Y=\rho $ in
(\ref{c1}), we have
\begin{equation} P(R(U ,V)\rho, \rho )=0.\label{c2}\end{equation}
Utilizing (\ref{15}), we obtain
\begin{equation}
\label{c3}
\frac{\alpha-\beta}{1-n} P(A(U)V-A(V)U,\rho)=0.
\end{equation}
Replacing $U$ by $\rho$ in the foregoing equation and using (\ref{2}), we infer
\begin{equation}
\label{c4}
\frac{\alpha-\beta}{1-n} \{-P(V,\rho)-A(V)P(\rho,\rho)\}=0,
\end{equation}
which entails that either $\alpha=\beta$, or
\begin{equation}
\label{c5}
P(V,\rho)=-A(V)P(\rho,\rho).
\end{equation}
Since $P$ is parallel, we have
\begin{eqnarray}
0=(\nabla_{X}P)(V,\rho)&=&\nabla_{X} P(V,\rho)-P(\nabla_{X}V,\rho)-P(V,\nabla_{X}\rho)\nonumber\\&&
=-\nabla_{X} A(V)P(\rho,\rho)+A(\nabla_{X}V)P(\rho,\rho)-P(V,\mu X)\nonumber\\&&
=-\nabla_{X} A(V)P(\rho,\rho)-\mu P(V,X)\nonumber\\&&
=-\mu g(X,V)P(\rho,\rho)-\mu P(V,X)\nonumber.\end{eqnarray}
Since $\mu \neq 0$, $$g(X,V)P(\rho,\rho)=P(V,X),$$
which implies that
$$g(X,V)(\nabla_{Y}P)(\rho,\rho)=(\nabla_{Y}P)(V,X).$$
Since $\nabla P=0$, we conclude that $P(\rho,\rho)=$ constant.\par

From (\ref{15}) we derive
\begin{equation}
\label{c6}
R(U,\rho)V=\frac{\alpha-\beta}{1-n}\{g(U,V)\rho-A(V)U\}.
\end{equation}
Putting $V=Y=\rho$ in (\ref{c1}) and using (\ref{c6}), we lead either $\alpha=\beta$, or $$P(X,U)=P(\rho,\rho)g(X,U).$$
When $\alpha=\beta$,  from equation (\ref{1.7}), we find
\begin{equation*}
\label{}
p=\frac{3-n}{n-1}\sigma,
\end{equation*}
which gives the equation of state  in a perfect fluid spacetime.
Therefore, we can write:
\begin{theorem}
\label{thmc3.1}
 If a perfect fluid spacetimes admits a second order symmetric parallel tensor, then either the state equation of the perfect fluid spacetime is characterized by $p=\frac{3-n}{n-1}\sigma$ , or the tensor $P$ is a constant multiple of the metric tensor.
\end{theorem}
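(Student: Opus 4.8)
The plan is to exploit the standard integrability condition satisfied by a parallel symmetric $(0,2)$-tensor, combined with the rigid curvature structure (\ref{15}) that the concircular velocity field forces on the spacetime. Since $\nabla P=0$, taking a second covariant derivative and invoking the Ricci identity produces the purely algebraic constraint $P(R(U,V)X,Y)+P(X,R(U,V)Y)=0$ for all vector fields $U,V,X,Y$. The whole argument then reduces to feeding the special tensor (\ref{15}) into this identity with carefully chosen slots and reading off the resulting dichotomy.

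First I would set $X=Y=\rho$. By symmetry of $P$ the two summands coincide, collapsing the constraint to $P(R(U,V)\rho,\rho)=0$; substituting (\ref{15}) gives $\frac{\alpha-\beta}{1-n}P(A(U)V-A(V)U,\rho)=0$. Replacing $U$ by $\rho$ and using $A(\rho)=-1$ from (\ref{2}) yields $\frac{\alpha-\beta}{1-n}\{P(V,\rho)+A(V)P(\rho,\rho)\}=0$, so either $\alpha=\beta$ or $P(V,\rho)=-A(V)P(\rho,\rho)$. In the second branch I would then show $P$ is forced to be proportional to $g$: combining this relation with the parallel condition $\nabla P=0$ and the concircular equation $\nabla_X\rho=\mu X$ (here the nontriviality $\mu\neq0$ is used), a short covariant differentiation yields $P(V,X)=P(\rho,\rho)\,g(V,X)$, and differentiating once more shows the factor $P(\rho,\rho)$ is constant. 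A slightly cleaner alternative that arrives at the same conclusion in one step is to rewrite (\ref{15}) as $R(U,\rho)V=\frac{\alpha-\beta}{1-n}\{g(U,V)\rho-A(V)U\}$ and substitute it into the integrability identity with $V=Y=\rho$, again delivering either $\alpha=\beta$ or $P(X,U)=P(\rho,\rho)g(X,U)$.

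It then remains only to interpret the exceptional case $\alpha=\beta$. Here I would invoke the expressions (\ref{1.7}), namely $\beta=\kappa(p+\sigma)$ and $\alpha=\frac{\kappa(p-\sigma)}{2-n}$, and solve $\frac{\kappa(p-\sigma)}{2-n}=\kappa(p+\sigma)$; clearing the denominator gives $p-\sigma=(2-n)(p+\sigma)$, which rearranges to the claimed state equation $p=\frac{3-n}{n-1}\sigma$.

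The step I expect to be the main obstacle is not any single computation but the bookkeeping of the case split in the second branch: one must verify that the proportionality factor $P(\rho,\rho)$ is genuinely a constant on $M$ rather than merely a scalar function, and this is precisely where the parallel hypothesis $\nabla P=0$ and the nonvanishing of the potential $\mu$ must be used together. Once that constancy is secured, the conclusion that $P$ is a constant multiple of $g$ is immediate, and the remaining algebra translating $\alpha=\beta$ into the equation of state is routine.
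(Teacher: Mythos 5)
Your proposal follows the paper's own proof essentially step for step: the same Ricci-identity constraint $P(R(U,V)X,Y)+P(X,R(U,V)Y)=0$, the same substitution $X=Y=\rho$ with (\ref{15}) giving the dichotomy $\alpha=\beta$ or $P(V,\rho)=-A(V)P(\rho,\rho)$, the same two routes to $P\propto g$ (covariant differentiation with $\mu\neq 0$, and the rewriting $R(U,\rho)V$ fed back into the identity with $V=Y=\rho$), and the same algebra on (\ref{1.7}) turning $\alpha=\beta$ into $p=\frac{3-n}{n-1}\sigma$. No substantive difference from the paper's argument.
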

\begin{remark}
 For $n =4$, we get the state equation as $\sigma+3p=0$, which implies the radiation and it characterizes the early universe.
\end{remark}
\begin{corollary}
\label{cor2.2}
If a perfect fluid spacetime is Ricci symmetric ($\nabla S=0$), then the manifold is an Einstein manifold, provided $\alpha \neq \beta$.
\end{corollary}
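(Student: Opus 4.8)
The plan is to recognize that the Ricci tensor itself furnishes the second order symmetric parallel tensor required by Theorem \ref{thmc3.1}. Since $S$ is symmetric and the hypothesis of Ricci symmetry is precisely $\nabla S=0$, the Ricci tensor qualifies as a parallel symmetric $(0,2)$-tensor. I would therefore apply Theorem \ref{thmc3.1} with the choice $P=S$.

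Theorem \ref{thmc3.1} offers a dichotomy: either the state equation reads $p=\frac{3-n}{n-1}\sigma$, or the parallel tensor is a constant multiple of $g$. The first branch is exactly the condition $\alpha=\beta$, as one verifies directly from equation (\ref{1.7}). Since we assume $\alpha\neq\beta$, this branch is excluded, and hence the second alternative must hold, giving $S=P(\rho,\rho)\,g$.

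It remains to identify the proportionality factor and to confirm that it is constant. Evaluating equation (\ref{11}) at $U=\rho$ and using (\ref{2}) yields $S(\rho,\rho)=(\alpha-\beta)A(\rho)=\beta-\alpha$, so that $P(\rho,\rho)=\beta-\alpha$; the parallel-tensor computation carried out inside the proof of Theorem \ref{thmc3.1} already shows that $P(\rho,\rho)$ is constant on $M$. Consequently $S=(\beta-\alpha)\,g$ with $\beta-\alpha$ constant, which is precisely the Einstein condition.

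The argument is essentially immediate once the identification $P=S$ is made, so I do not anticipate a substantive obstacle. The only point that requires care is verifying that the scalar $\beta-\alpha$ is genuinely a constant rather than merely a function on $M$; this is guaranteed by the parallelism $\nabla S=0$ through the same reasoning that establishes constancy of $P(\rho,\rho)$ in Theorem \ref{thmc3.1}.
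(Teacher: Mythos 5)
Your proposal is correct and matches the paper's intended argument: the corollary is stated as an immediate consequence of Theorem \ref{thmc3.1}, obtained by taking $P=S$ (symmetric and parallel by hypothesis) and using $\alpha\neq\beta$ to exclude the state-equation branch, so that $S$ is a constant multiple of $g$. Your additional identification of the constant as $\beta-\alpha=S(\rho,\rho)$ via equation (\ref{11}) and (\ref{2}) is a correct refinement of the same approach.
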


\section{  Ricci soliton and gradient Ricci soliton on a perfect fluid spacetime}

$\;\;\;\;$ Suppose that a perfect fluid spacetime with concircular vector field admits a Ricci soliton defined by $(\ref{4})$. We know that $\nabla g=0$ and $\nabla \lambda g=0,$ since $\lambda $ in the equation $(\ref{4})$ is a constant. Therefore, $\pounds _{W}g+2S$ is parallel. Hence utilizing the preceding theorem we obtain $\pounds _{W}g+2S $ is a constant multiple of metric tensors $g$, that is, $\pounds _{W}g+2S=a_{1} g,$ where $a_{1}$ is constant, provided $\alpha \neq \beta$. Hence $\pounds _{V}g+2S+2\lambda g$ takes the form $(a_{1}+2\lambda )g$, that implies $\lambda =-a_{1}/2$. Thus we
write the subsequent result:
\begin{theorem}
In a perfect fluid spacetime equipped with concircular vector field, the Ricci soliton  $(g,V,\lambda )$ is expanding or shrinking according as $a_{1}$ is negative or positive, provided $\alpha \neq \beta$.\end{theorem}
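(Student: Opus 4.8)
The plan is to recognize $\mathfrak{L}_{W}g+2S$ as a second order symmetric parallel tensor and then appeal to Theorem \ref{thmc3.1}. First I would rewrite the soliton equation (\ref{4}) in the form $\mathfrak{L}_{W}g+2S=-2\lambda g$. Because $\lambda$ is a real constant and $\nabla g=0$, the right-hand side is parallel, $\nabla(-2\lambda g)=0$; hence the symmetric tensor $P:=\mathfrak{L}_{W}g+2S$ satisfies $\nabla P=0$.

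Next I would invoke Theorem \ref{thmc3.1} applied to $P$. It furnishes a dichotomy: either the state equation $p=\frac{3-n}{n-1}\sigma$ holds, which by (\ref{1.7}) is precisely the degenerate case $\alpha=\beta$, or $P$ is a constant multiple of $g$. The standing hypothesis $\alpha\neq\beta$ rules out the first alternative, so there is a constant $a_{1}$ with $\mathfrak{L}_{W}g+2S=a_{1}g$. Substituting this into (\ref{4}) gives $(a_{1}+2\lambda)g=0$, and nondegeneracy of $g$ forces $\lambda=-a_{1}/2$.

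The conclusion is then a matter of reading off signs against the convention fixed below (\ref{4}): $a_{1}<0$ yields $\lambda>0$ (expanding), while $a_{1}>0$ yields $\lambda<0$ (shrinking). I expect no serious obstacle here; the single point requiring care is confirming that the excluded branch of Theorem \ref{thmc3.1} is exactly $\alpha=\beta$, so that the assumption $\alpha\neq\beta$ genuinely forces $P$ to be proportional to the metric. Everything else is routine substitution.
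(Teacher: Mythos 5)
Your proposal is correct and follows essentially the same route as the paper: both recognize $\mathfrak{L}_{W}g+2S$ as a parallel symmetric $(0,2)$-tensor (since $\lambda$ is constant and $\nabla g=0$), invoke Theorem \ref{thmc3.1} with the hypothesis $\alpha\neq\beta$ to exclude the state-equation branch and obtain $\mathfrak{L}_{W}g+2S=a_{1}g$, and then deduce $\lambda=-a_{1}/2$, from which the expanding/shrinking dichotomy follows by the sign convention stated after (\ref{4}). Your explicit identification of the excluded branch of Theorem \ref{thmc3.1} with $\alpha=\beta$ via (\ref{1.7}) is a point the paper leaves implicit, but the argument is the same.
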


Now, particularly we study the case $W=\rho .$ Then $(\ref{4})$ takes the form
\begin{equation} (\pounds_{\rho }g)(U,V)+2S(U,V)+2\lambda g(U,V)=0.\label{f1}\end{equation}
Utilizing $\nabla_{U} \rho=\mu U$, we infer
\begin{equation} S(U,V)=-(\mu+\lambda)g(U,V),\label{f2}\end{equation}
which implies a Einstein spacetime, that is, trivial Ricci soliton.
Therefore, we have
\begin{theorem} In a perfect fluid spacetime endowed with concircular vector field, the Ricci soliton $(g,\rho ,\lambda )$ is trivial and is an Einstein spacetime, provided $\alpha \neq \beta$.\end{theorem}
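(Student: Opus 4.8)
The plan is to substitute the choice $W=\rho$ directly into the Ricci soliton equation (\ref{4}) and to use the concircular hypothesis $\nabla_{U}\rho=\mu U$ to evaluate the only unknown term, the Lie derivative of the metric. Setting $W=\rho$ in (\ref{4}) produces equation (\ref{f1}),
\[
(\pounds_{\rho}g)(U,V)+2S(U,V)+2\lambda g(U,V)=0,
\]
so the entire statement reduces to computing $(\pounds_{\rho}g)(U,V)$ and reading off the resulting expression for $S$.

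The key step is the Lie derivative. Using the standard identity $(\pounds_{\rho}g)(U,V)=g(\nabla_{U}\rho,V)+g(U,\nabla_{V}\rho)$ together with $\nabla_{U}\rho=\mu U$ gives at once
\[
(\pounds_{\rho}g)(U,V)=2\mu\,g(U,V).
\]
Inserting this into (\ref{f1}) and cancelling the common factor $2$ yields
\[
S(U,V)=-(\mu+\lambda)\,g(U,V),
\]
which is precisely (\ref{f2}) and displays the Ricci tensor as a scalar multiple of the metric.

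It remains to interpret (\ref{f2}). Comparison with the perfect fluid form (\ref{1}) shows that the anisotropic term $\beta\,A\otimes A$ must vanish, so the Ricci tensor is purely isotropic and the spacetime is Einstein; this is what is meant by calling the soliton trivial. The one point requiring genuine care---and the main, if modest, obstacle---is to confirm that the proportionality factor $-(\mu+\lambda)$ is an honest constant, so that \emph{Einstein} is used in its standard sense. In dimension $n\ge 3$ this follows from the contracted second Bianchi identity (Schur's lemma), which forces $-(\mu+\lambda)$, and hence $\mu$, to be constant; thus $\rho$ is in fact concurrent and the conclusion is complete.
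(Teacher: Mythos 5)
Your proposal is correct and follows essentially the same route as the paper: substitute $W=\rho$, use $\nabla_{U}\rho=\mu U$ to get $(\pounds_{\rho}g)(U,V)=2\mu g(U,V)$, and conclude $S=-(\mu+\lambda)g$, i.e.\ the spacetime is Einstein and the soliton trivial. Your closing appeal to Schur's lemma to guarantee that $-(\mu+\lambda)$ is genuinely constant is a worthwhile refinement that the paper omits, but it does not change the substance of the argument.
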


Next part of this section deals with the investigation of gradient Ricci solitons in perfect fluid spacetimes with concircular vector field. \par

Now,
\begin{eqnarray}(\nabla_{V}A)(U)&=&\nabla_{V} A(U)-A(\nabla_{V}U)\nonumber\\&&
=\nabla_{V} g(U, \rho)-g(\nabla_{V}U, \rho)\nonumber\\&&
=g(U, \nabla_{V}\rho)=\mu g(U,V)\label{3.1},\end{eqnarray}
$\forall \,\, U, \,V \in \mathfrak{X}(M)$ and from (\ref{1}), we obtain

\begin{equation}
\label{3.2}
QU=\alpha U+\beta A(U) \rho,\,\, \forall\,\, U\in \mathfrak{X}(M).
\end{equation}
Let us assume that the soliton vector field $W$ of the Ricci soliton $(g, W, \lambda)$ in a perfect fluid spacetime with concircular vector field is a gradient of some smooth function $-f$. Then equation (\ref{4}) reduces to
\begin{equation}
\label{3.3}
\nabla_{U}Df=QU+\lambda U
\end{equation}
for all $U \in \mathfrak{X}(M)$.
The equation (\ref{3.3}) along with the subsequent relation
\begin{equation}
\label{3.4}
R(U, V)Df=\nabla_{U} \nabla_{V}Df-\nabla_{V} \nabla_{U}Df-\nabla_{[U, V]}Df
\end{equation}
yield
\begin{equation}
\label{3.5}
R(U, V)Df=(\nabla_{U}Q)(V)-(\nabla_{V}Q)(U).
\end{equation}
Executing the covariant derivative of (\ref{3.2}) and utilizing (\ref{3.1}), we lead
\begin{equation}
\label{3.6}
(\nabla_{U}Q)(V)=(U \alpha)V+(U \beta) A(V)\rho+\beta (\mu g(U,V)\rho+\mu A(V)U).
\end{equation}
In view of equations (\ref{3.5}) and (\ref{3.6}), we infer
\begin{eqnarray}
\label{3.7}
&&R(U, V)Df=(U \alpha)V-(V \alpha)U+\{(U \beta)A(V)-(V \beta)A(U)\}\rho\nonumber\\&&
\,\,\,\,\,\,\,\,\,\,\,\,\,\,\,\,+\mu \beta \{A(V)U-A(U)V\}.
\end{eqnarray}
Taking a set of orthonormal frame field and executing contraction of the equation (\ref{3.7}), we get
\begin{equation}
\label{3.8}
S(U, Df)=(1-n)(U \alpha)+(U \beta)+(\rho \beta)A(U)+\mu \beta(n-1)A(U).
\end{equation}
Again, from (\ref{1}) we have
\begin{equation}
\label{3.9}
S(V, Df)=\alpha (V f)+\beta A(V) (\rho f).
\end{equation}
Setting $V=\rho$ in equations (\ref{3.8}) and (\ref{3.9}) and then equating the values of $S(\rho, Df)$, we find
\begin{equation}
\label{3.10}
(\alpha-\beta) (\rho f)=(1-n) \{(\rho \alpha)-\mu \beta\}.
\end{equation}
Suppose that $$(\rho \alpha)=\mu \beta.$$ Thus, from equations (\ref{3.10}) we get
\begin{equation}
\label{3.11}
(\alpha-\beta) (\rho f)=0,
\end{equation}
which shows that either $\alpha=\beta$ or $(\rho f)=0$ on a perfect fluid spacetime with the gradient Ricci soliton.\par

{\bf Case I}.
We assume that $\alpha=\beta$ and ($\rho f) \ne 0$ and therefore from equation (\ref{1.7}), we conclude that
\begin{equation*}
\label{}
p=\frac{3-n}{n-1}\sigma.
\end{equation*}
This gives the equation of state  in a perfect fluid spacetime. Also, $\lambda=\beta-\alpha=0$ and hence the gradient Ricci soliton is steady.\par

{\bf Case II}. We consider that ($\rho f)=0$ and $\alpha \ne \beta$.
Then, we have, $f$ is invariant under the velocity vector field $\rho$.\par

By concluding the above facts, we can write our result as:
\begin{theorem}
\label{thm3.1}
Let the perfect fluid spacetimes with concircular vector field admit a gradient Ricci soliton with $(\rho \alpha)= \mu \beta$. Then  either the state equation of the perfect fluid spacetime is governed by $p=\frac{3-n}{n-1}\sigma$ and the soliton is steady, or $f$ is invariant under the velocity vector field $\rho$.
\end{theorem}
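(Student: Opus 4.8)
The plan is to combine the gradient Ricci soliton equation with the concircular structure so as to produce two independent expressions for $S(V,Df)$, compare them along the velocity direction $\rho$, and then read off the dichotomy. First I would rewrite the soliton condition $(\ref{4})$ with $W=-Df$ in the operator form $\nabla_{U}Df=QU+\lambda U$, and record the explicit Ricci operator $QU=\alpha U+\beta A(U)\rho$ coming from $(\ref{1})$. I would also carry along the two identities the concircular hypothesis $\nabla_{U}\rho=\mu U$ supplies, namely $(\nabla_{V}A)(U)=\mu g(U,V)$ and the curvature relation $(\ref{15})$, both already available from Section 2.

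The technical heart of the argument is the computation of $R(U,V)Df$. Using the commutator formula $(\ref{3.4})$ together with $\nabla_{U}Df=QU+\lambda U$, the $\lambda$-terms cancel and one is left with $R(U,V)Df=(\nabla_{U}Q)(V)-(\nabla_{V}Q)(U)$. I would then differentiate $QU=\alpha U+\beta A(U)\rho$ covariantly, feeding in $(\nabla_{V}A)(U)=\mu g(U,V)$ and $\nabla_{U}\rho=\mu U$, to reach $(\nabla_{U}Q)(V)=(U\alpha)V+(U\beta)A(V)\rho+\beta\mu\{g(U,V)\rho+A(V)U\}$; antisymmetrizing gives the closed expression $(\ref{3.7})$. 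The step I expect to be most error-prone is the contraction of $(\ref{3.7})$ over $U$ in an orthonormal frame to produce $(\ref{3.8})$ for $S(U,Df)$: one must track the trace of each block carefully, in particular the two $\mu\beta$ contributions and the sign introduced by $g(\rho,\rho)=-1$. This is the main obstacle, the rest being bookkeeping.

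With $(\ref{3.8})$ in hand I would contract $(\ref{1})$ directly to obtain the second expression $S(V,Df)=\alpha(Vf)+\beta A(V)(\rho f)$, equation $(\ref{3.9})$. Evaluating both expressions at $V=\rho$ and using $A(\rho)=-1$ from $(\ref{2})$ collapses them to the single scalar relation $(\ref{3.10})$. At this point imposing the hypothesis $(\rho\alpha)=\mu\beta$ annihilates the right-hand side and forces $(\alpha-\beta)(\rho f)=0$, which is the clean fork $\alpha=\beta$ or $(\rho f)=0$.

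Finally I would unwind each branch. In the branch $\alpha=\beta$ I would substitute into the state relations $(\ref{1.7})$ and solve the resulting linear identity for $p$, which yields $p=\frac{3-n}{n-1}\sigma$; the same equality $\alpha=\beta$ identifies the soliton constant as $\lambda=\beta-\alpha=0$, so the soliton is steady. In the complementary branch $(\rho f)=0$ says exactly that the potential function $f$ is annihilated by the velocity field, i.e.\ invariant under $\rho$. Collecting the two branches gives the stated theorem.
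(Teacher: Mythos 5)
Your proposal follows the paper's proof essentially line for line: the operator form $\nabla_{U}Df=QU+\lambda U$, the commutator identity giving $R(U,V)Df=(\nabla_{U}Q)(V)-(\nabla_{V}Q)(U)$, the covariant derivative of $QU=\alpha U+\beta A(U)\rho$ via the concircular relations, the contraction to $S(U,Df)$, comparison with $S(V,Df)=\alpha(Vf)+\beta A(V)(\rho f)$ at $V=\rho$, and the resulting dichotomy $(\alpha-\beta)(\rho f)=0$ resolved exactly as in the paper (including the identification $\lambda=\beta-\alpha=0$ in the first branch). This is the same argument, correctly reproduced, so there is nothing further to compare.
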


\section{Gradient Yamabe soliton on perfect fluid spacetimes}

From equation (\ref{yb2}), we find
\begin{equation}
\label{4.1}
\nabla_{V}Df=(r-\lambda)V.
\end{equation}
Differentiating (\ref{4.1}) covariantly along the vector field $V$, we have
\begin{equation}
\label{4.2}
\nabla_{U}\nabla_{V}Df=(Ur)V+(r-\lambda)\nabla_{U}V.
\end{equation}
Interchanging $U$ and $V$ in the foregoing equation and then utilizing the above equation, (\ref{4.1}) and (\ref{4.2}) in $R(U, V)Df=\nabla_{U} \nabla_{V}Df-\nabla_{V}\nabla_{U}Df-\nabla_{[U, V]}Df$, we lead
\begin{equation*}
\label{4.3}
R(U, V)Df=(Ur)V-(Vr)U.
\end{equation*}
Considering an orthonormal frame field and contracting the above equation over $U$, we get
\begin{equation*}
\label{4.4}
S(V, Df)=-(n-1)(Vr).
\end{equation*}
From equation (\ref{1}) we have
\begin{equation*}
\label{4.5}
S(V, Df)=\alpha (Vf)+\beta (\rho f) A(V).
\end{equation*}
Combining the last two equations, we infer
\begin{equation}
\label{4.5a}
\alpha (V f)+\beta (\rho f) A(V)=-(n-1)(V r).
\end{equation}
Putting $V=\rho$ in the preceding equation, we get
\begin{equation}
\label{4.6}
(\alpha-\beta)(\rho f)=-(n-1)(\rho r).
\end{equation}

Now, from (\ref{4.3}) we infer that
\begin{equation}\label{4.7}
g(R(U,V)Df,\rho)=(U r)A(V)-(V r)A(U).
\end{equation}
Again (\ref{15}) implies that
\begin{equation}\label{4.8}
g(R(U, V)\rho,Df)=\frac{\alpha-\beta}{1-n}\{A(U)(V f)-A(V)(U f)\}.
\end{equation}
Combining equation (\ref{4.7}) and (\ref{4.8}), we have
\begin{equation}\label{4.9}
(U r)A(V)-(V r)A(U)=-\frac{\alpha-\beta}{1-n}\{A(U)(V f)-A(V)(U f)\}.
\end{equation}
Setting $V=\rho$ in the previous equation gives
\begin{equation}\label{4.10}
(U r)=\frac{\alpha-\beta}{1-n}(U f).
\end{equation}
Utilizing (\ref{4.10}) in (\ref{4.5a}) we infer that
\begin{equation}\label{4.11}
\beta \{ (U f)+(\rho f)A(U)\}=0,
\end{equation}

which entails that either $\beta =0$ or $(U f)+(\rho f)A(U)=0$.\par

 If $\beta =0$, then we infer that $\sigma +p=0$. This represents a dark energy.\par

Next, we suppose that $\beta \ne 0$ and $(U f)+(\rho f)A(U)=0$  which implies $Df=-(\rho f)\rho$.
Hence, we conclude the result as:
\begin{theorem}
\label{thm4.1}
If the Lorentzian metric of a perfect fluid spacetime equipped with concircular vector field be a gradient Yamabe soliton, then either the spacetime represents a dark energy, or the gradient of Yamabe soliton potential function is pointwise collinear with the velocity vector field of the perfect fluid spacetime.
\end{theorem}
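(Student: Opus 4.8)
The plan is to work entirely from the gradient Yamabe soliton equation (\ref{yb2}) together with the two curvature identities already available in this setting, and to extract enough scalar relations between $\nabla r$ and $\nabla f$ to force the stated dichotomy. First I would rewrite (\ref{yb2}) as $\nabla_V Df = (r-\lambda)V$, differentiate it covariantly, and feed the result into the definition $R(U,V)Df = \nabla_U\nabla_V Df - \nabla_V\nabla_U Df - \nabla_{[U,V]}Df$. Because the terms proportional to $(r-\lambda)$ recombine with the bracket term $\nabla_{[U,V]}Df$ and cancel, this leaves the clean identity $R(U,V)Df = (Ur)V - (Vr)U$. Contracting it over an orthonormal frame then produces $S(V,Df) = -(n-1)(Vr)$.

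Next I would compute $S(V,Df)$ a second way, directly from the perfect fluid form (\ref{1}), obtaining $S(V,Df) = \alpha(Vf) + \beta(\rho f)A(V)$. Equating the two expressions yields the master relation (\ref{4.5a}). At this stage I have a single scalar equation linking $\nabla r$ and $\nabla f$; to pin the geometry down I need a second, independent one, and this is exactly where the concircular hypothesis enters. Exploiting the skew-symmetry $g(R(U,V)Df,\rho) = -g(R(U,V)\rho,Df)$, I would evaluate the left-hand side from the soliton identity for $R(U,V)Df$ and the right-hand side from the concircular identity (\ref{15}) for $R(U,V)\rho$. Setting $V=\rho$ and using $A(\rho)=-1$ then collapses the resulting equation to the proportionality $(Ur) = \frac{\alpha-\beta}{1-n}(Uf)$.

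The decisive step is to substitute this proportionality back into the master relation (\ref{4.5a}). A one-line computation shows that $-(n-1)(Vr) = (\alpha-\beta)(Vf)$, so the $\alpha(Vf)$ contributions cancel exactly and one is left with $\beta\{(Vf)+(\rho f)A(V)\}=0$. From here the conclusion is immediate: either $\beta=0$, whence (\ref{1.7}) forces $\sigma+p=0$ and the spacetime represents dark energy, or the bracket vanishes, i.e. $Df = -(\rho f)\rho$, so that $Df$ is pointwise collinear with the velocity vector field $\rho$.

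I expect the main obstacle to be the construction of the \emph{second} relation rather than any individual computation. Each natural way of contracting a curvature identity produces only one equation in the unknown gradients $\nabla r$ and $\nabla f$, and the key observation is that pairing the soliton identity with the concircular identity (\ref{15}) through the antisymmetry of $R$ in its last two arguments is precisely what yields the proportionality making the $\alpha$-terms cancel. Once that cancellation is recognized, the dichotomy falls out with no further work.
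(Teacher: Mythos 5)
Your proposal is correct and follows essentially the same route as the paper: the soliton identity $R(U,V)Df=(Ur)V-(Vr)U$ and its contraction, the second expression for $S(V,Df)$ from (\ref{1}), the pairing with the concircular identity (\ref{15}) via the skew-symmetry of $R$ in its last two arguments to get $(Ur)=\frac{\alpha-\beta}{1-n}(Uf)$, and the substitution yielding $\beta\{(Vf)+(\rho f)A(V)\}=0$ are exactly the paper's equations (\ref{4.5a})--(\ref{4.11}). The only cosmetic difference is that you make the antisymmetry argument explicit where the paper simply ``combines'' (\ref{4.7}) and (\ref{4.8}), and note that collapsing to the proportionality at $V=\rho$ also quietly uses the $V=\rho$ instance (\ref{4.6}) of your master relation to cancel the $(\rho r)A(U)$ term.
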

Next, we consider that $\beta \ne 0$ on a perfect fluid spacetime with concircular vector field admitting a gradient Yamabe soliton.

The covariant derivative of $Df=-(\rho f)\rho$ yields
\begin{equation*}
\label{}
-(r-\lambda)=-\rho (\rho f)+\mu (\rho f),
\end{equation*}
where equation (\ref{4.1}) is used. If $f$ is invariant under the velocity vector field $\rho$, then we find
\begin{equation}
\label{4.12}
\lambda=r.
\end{equation}
 From the above we conclude that the scalar curvature of the manifold is constant. Hence from equation (\ref{4.10}) we have either $\alpha=\beta$ or $Df=0$.\par

{\bf Case I}.
We suppose that $\alpha=\beta$ and $ (D f) \ne 0$ and therefore from equation (\ref{1.7}), we conclude that
\begin{equation*}
\label{}
p=\frac{3-n}{n-1}\sigma.
\end{equation*}
This gives the equation of state in a perfect fluid spacetime.\par

{\bf Case II}. We consider that $(D f)=0$ and $\alpha \ne \beta$.
The equation $Df=0$ states that $f$ is constant and hence the gradient Yamabe soliton is trivial.
Thus, we can write:
\begin{corollary}
\label{cor4.1}
Let the Lorentzian metric of a perfect fluid spacetime endowed with concircular vector field admits a gradient Yamabe soliton with $\beta \ne 0$. If $f$ is invariant under the velocity vector field $\rho$, then either the state equation of the perfect fluid spacetime is governed by $p=\frac{3-n}{n-1}\sigma$
 or the gradient Yamabe soliton is trivial.
\end{corollary}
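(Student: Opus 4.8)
The plan is to build directly on Theorem~\ref{thm4.1}. Since $\beta \ne 0$ is assumed, that theorem already forces the gradient of the potential to be pointwise collinear with the velocity vector field, namely $Df = -(\rho f)\rho$. The idea is to differentiate this collinearity relation, feed in the hypothesis that $f$ is invariant under $\rho$, and thereby pin down the scalar curvature; everything after that is an elementary dichotomy.

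First I would take the covariant derivative of $Df = -(\rho f)\rho$. Writing $h := \rho f$, the Leibniz rule together with the concircular condition $\nabla_U \rho = \mu U$ gives $\nabla_U Df = -(Uh)\rho - \mu h\,U$, while the gradient Yamabe soliton equation (\ref{4.1}) gives $\nabla_U Df = (r-\lambda)U$. Comparing these two expressions (contracting against $\rho$ and using $g(\rho,\rho)=-1$) yields a scalar identity of the shape
\begin{equation*}
-(r-\lambda) = -\rho(\rho f) + \mu(\rho f).
\end{equation*}
The hypothesis that $f$ is invariant under $\rho$, i.e. $\rho f \equiv 0$, makes every term on the right-hand side vanish, so that $\lambda = r$. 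Because $\lambda$ is a constant by the very definition of the soliton, this identity forces the scalar curvature $r$ to be constant on $M$.

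Next I would exploit equation (\ref{4.10}), $(Ur) = \frac{\alpha-\beta}{1-n}(Uf)$. With $r$ constant the left-hand side vanishes identically, so $\frac{\alpha-\beta}{1-n}(Uf)=0$ for every $U$. Since $1-n \ne 0$, this splits into two mutually exclusive possibilities: either $\alpha = \beta$ everywhere, or $Uf = 0$ for all $U$, i.e. $Df = 0$. In the first case I would substitute the expressions (\ref{1.7}) for $\alpha$ and $\beta$ in terms of $p$ and $\sigma$, set them equal, and solve the resulting linear relation to recover the state equation $p = \frac{3-n}{n-1}\sigma$. In the second case $f$ is constant, and hence the gradient Yamabe soliton is trivial by definition. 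Together these exhaust the dichotomy claimed in the corollary.

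The only genuinely delicate step is the differentiation of the collinearity relation and the subsequent contraction: one must handle the sign coming from $g(\rho,\rho)=-1$ carefully and correctly isolate the second-order term $\rho(\rho f)$. Fortunately this is precisely the term that is killed by the invariance hypothesis $\rho f\equiv 0$, so the crucial conclusion $\lambda = r$ is insensitive to the exact bookkeeping of signs in that term; the remaining two case computations are then entirely routine.
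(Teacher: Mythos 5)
Your proposal is correct and follows essentially the same route as the paper: differentiate the collinearity relation $Df=-(\rho f)\rho$ furnished by Theorem~\ref{thm4.1}, compare with equation (\ref{4.1}) to get $-(r-\lambda)=-\rho(\rho f)+\mu(\rho f)$, use the invariance $\rho f=0$ to obtain $\lambda=r$ and hence constant scalar curvature, and then split via equation (\ref{4.10}) into the cases $\alpha=\beta$ (giving the state equation through (\ref{1.7})) and $Df=0$ (giving triviality). The sign ambiguity you flag in the $\rho(\rho f)$ term is indeed immaterial, exactly as you argue.
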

Next, If $f$ is invariant under the velocity vector field $\rho$, then equation (\ref{4.12}) is satisfied. Thus, we conclude that the nature of the flow vary according to the scalar curvature. Thus we can write the subsequent corollaries.
\begin{corollary}
\label{cor4.2}
Let a perfect fluid spacetime with concircular vector field admits a gradient Yamabe soliton with $\beta \ne 0$. If $f$ is invariant under the velocity vector field $\rho$, then the gradient Yamabe soliton is expanding, shrinking or steady according as the scalar curvature is positive, negative or zero, respectively.
\end{corollary}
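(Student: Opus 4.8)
The plan is to exploit the relation $\lambda = r$ already recorded as equation (\ref{4.12}), since that identity was derived precisely under the standing hypothesis of this corollary, namely that $f$ is invariant under the velocity vector field $\rho$. So essentially all of the analytic work has been done upstream, and the corollary is a reading-off of the sign classification. First I would recall how (\ref{4.12}) is obtained: with $\beta \neq 0$, Theorem \ref{thm4.1} gives $Df = -(\rho f)\rho$; differentiating this covariantly and substituting the gradient Yamabe soliton identity (\ref{4.1}), namely $\nabla_V Df = (r-\lambda)V$, produces the relation $-(r-\lambda) = -\rho(\rho f) + \mu(\rho f)$. Imposing $\rho f = 0$ annihilates the right-hand side and leaves exactly $\lambda = r$.

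Next I would invoke the standard classification of a gradient Yamabe soliton by the sign of its soliton constant $\lambda$: the soliton is expanding, shrinking, or steady according as $\lambda$ is positive, negative, or zero, matching the convention fixed for solitons in the introduction. Combining this with the identity $\lambda = r$ from (\ref{4.12}) transfers the trichotomy verbatim onto the scalar curvature: $r>0$ forces $\lambda>0$ (expanding), $r<0$ forces $\lambda<0$ (shrinking), and $r=0$ forces $\lambda=0$ (steady). This is exactly the assertion of the corollary, so the statement follows by direct substitution with no further computation.

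The main (and only genuine) obstacle here is bookkeeping rather than mathematics: one must ensure that the sign convention adopted for \emph{expanding}, \emph{shrinking}, and \emph{steady} is the one consistent with equation (\ref{4.12}), so that $\lambda = r$ yields the stated correspondence and not its reverse. Since the hypotheses $\beta \neq 0$ and $\rho f = 0$ have already been used to secure both $Df = -(\rho f)\rho$ and the identity $\lambda = r$, no additional case analysis, curvature identity, or integration is required, and the proof reduces to quoting (\ref{4.12}) together with the soliton classification.
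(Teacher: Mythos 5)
Your proposal is correct and is essentially identical to the paper's own argument: the paper also obtains $\lambda=r$ (equation (\ref{4.12})) by covariantly differentiating $Df=-(\rho f)\rho$, substituting (\ref{4.1}), and imposing $\rho f=0$, and then reads off the expanding/shrinking/steady trichotomy from the sign of $\lambda$ exactly as you do. Your attention to the sign convention (positive $\lambda$ expanding, negative shrinking, zero steady) is the right bookkeeping point, and it agrees with the convention fixed in the paper's introduction.
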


\begin{corollary}
\label{cor4.3}
 If the metric of a perfect fluid spacetime equipped with concircular vector field admits a gradient Yamabe soliton with $\beta \ne 0$, then it possesses the constant scalar curvature, provided $f$ is invariant under the velocity vector field $\rho$.
\end{corollary}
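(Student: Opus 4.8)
The plan is to obtain $r=\lambda$ as a direct consequence of the collinearity already forced by $\beta\neq 0$, and then to exploit the constancy of the soliton constant $\lambda$. First I would recall Theorem~\ref{thm4.1}: when $\beta\neq 0$, the gradient of the Yamabe potential satisfies $Df=-(\rho f)\rho$, so that $Df$ is completely determined by the scalar $\rho f$ and the concircular field $\rho$. This collinearity identity is the structural fact on which everything rests.

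Second, I would differentiate this relation covariantly. Applying the product rule together with the concircular condition $\nabla_U\rho=\mu U$ to $Df=-(\rho f)\rho$, and comparing the outcome with the gradient Yamabe soliton equation in the form~(\ref{4.1}), namely $\nabla_U Df=(r-\lambda)U$, I would then contract the resulting identity against $\rho$, using $g(\rho,\rho)=-1$ from~(\ref{2}). This produces the scalar relation $-(r-\lambda)=-\rho(\rho f)+\mu(\rho f)$, exactly the identity recorded in the running text just before~(\ref{4.12}).

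Third, I would impose the hypothesis that $f$ is invariant under $\rho$, that is $\rho f=0$. Then both terms on the right-hand side of that scalar relation vanish, it collapses to $r-\lambda=0$, and we recover equation~(\ref{4.12}): $r=\lambda$. Since $\lambda$ is a real constant by the very definition~(\ref{yb2}) of a gradient Yamabe soliton, it follows immediately that $r$ is constant on the entire spacetime, which is precisely the assertion of the corollary.

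I do not expect any genuine obstacle: the corollary is essentially a one-line reading of~(\ref{4.12}) under the stated hypotheses. The only step demanding mild care is the covariant differentiation of $Df=-(\rho f)\rho$, where one must correctly differentiate the coefficient $\rho f$, insert the concircular identity, and keep track of the sign arising from $g(\rho,\rho)=-1$ upon contraction with $\rho$. Once that scalar identity is in hand, the conclusion that $r=\lambda$ is constant is purely algebraic and uses nothing beyond the constancy of $\lambda$.
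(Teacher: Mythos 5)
Your proposal is correct and follows essentially the same route as the paper: covariantly differentiate the collinearity relation $Df=-(\rho f)\rho$ furnished by Theorem~\ref{thm4.1}, compare with $\nabla_U Df=(r-\lambda)U$ to obtain the scalar identity preceding~(\ref{4.12}), and then use $\rho f=0$ to collapse it to $r=\lambda$, which is constant by the soliton definition. No gaps; this is exactly the paper's argument.
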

\begin{remark}
In \cite{ude}, the authors have studied gradient Yamabe soliton in perfect fluid spacetimes. But in this current paper, we consider gradient Yamabe soliton in perfect fluid spacetime with an extra condition. Precisely, in a perfect fluid spacetime, we consider the velocity vector field is of concircular type, introduced by Failkow \cite{fi} and we obtain some interesting results which are different from the results of the paper \cite{ude}.
\end{remark}

\section{Gradient $m$-quasi Einstein solitons on perfect fluid spacetimes}

Here, we investigate the perfect fluid spacetimes with concircular vector field with $m$-quasi Einstein metric and at first, we prove the following result
\begin{lemma}\label{lem1}
Every perfect fluid spacetimes with concircular vector field satisfies the following:
\begin{eqnarray}\label{k2}
R(U,V)D f &=& (\nabla_{V}Q)U-(\nabla_{U} Q)V+\frac{\lambda}{m}\{ (V f)U-(U f)V\}\nonumber\\&&
+\frac{1}{m}\{ (U f)Q V-(V f)Q U \},
\end{eqnarray}
for all $U, \, V \in \mathfrak{X}(M)$.
\end{lemma}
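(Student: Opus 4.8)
The plan is to recast the defining equation (\ref{a4}) as an identity between vector fields and then feed it into the standard commutator expression for the curvature tensor. Using $g(QU,V)=S(U,V)$, the identification $(\mathrm{Hess}\,f)(U,V)=g(\nabla_{U}Df,V)$, and $(df\otimes df)(U,V)=(Uf)(Vf)$, equation (\ref{a4}) can be read off slot by slot as the operator identity
\[
\nabla_{U}Df=-QU+\frac{1}{m}(Uf)\,Df+\lambda U,
\]
valid for every $U\in\mathfrak{X}(M)$. This reformulation is the heart of the argument; everything afterward is controlled bookkeeping.

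First I would differentiate this identity once more. Applying $\nabla_{U}$ to the version evaluated at $V$ and using the Leibniz rule expresses $\nabla_{U}\nabla_{V}Df$ through the pieces $(\nabla_{U}Q)V$, $Q\nabla_{U}V$, $(U(Vf))Df$, $(Vf)\nabla_{U}Df$, and $\lambda\nabla_{U}V$. The term $(Vf)\nabla_{U}Df$ still contains $\nabla_{U}Df$, so I would resubstitute the operator identity into it; this single resubstitution is exactly what manufactures the terms $\frac{1}{m}(Uf)QV$ and the $\frac{\lambda}{m}$-contributions appearing in the target formula.

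Next I would write down $\nabla_{V}\nabla_{U}Df$ by interchanging $U$ and $V$ and then subtract. In the difference I expect the symmetric pieces to cancel: the $\frac{1}{m^{2}}(Uf)(Vf)Df$ contributions vanish identically, while the surviving $Q$-terms and $\lambda$-terms reorganize into Lie brackets via the torsion-free relations $\nabla_{U}V-\nabla_{V}U=[U,V]$ and $(U(Vf))-(V(Uf))=[U,V]f$. Finally, subtracting $\nabla_{[U,V]}Df$ --- itself computed from the operator identity with $[U,V]$ in place of $U$ --- makes the three bracket terms $-Q([U,V])$, $\frac{1}{m}([U,V]f)Df$, and $\lambda[U,V]$ cancel against their counterparts, leaving precisely (\ref{k2}).

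The only genuine obstacle is disciplined sign- and slot-tracking through the resubstitution step, since that is where the \emph{asymmetric} terms $\frac{1}{m}\{(Uf)QV-(Vf)QU\}$ and $\frac{\lambda}{m}\{(Vf)U-(Uf)V\}$ are born; an arithmetic slip there would corrupt exactly the nonstandard part of the identity that distinguishes the gradient $m$-quasi Einstein case from an ordinary gradient Ricci soliton. Notably, this derivation uses only the $m$-quasi Einstein equation and not the concircular hypothesis, so the lemma holds for the underlying metric structure regardless of the perfect-fluid specialization.
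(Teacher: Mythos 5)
Your proposal is correct and follows essentially the same route as the paper: the paper likewise rewrites (\ref{a4}) as the operator identity $\nabla_{U}Df+QU=\frac{1}{m}g(U,Df)Df+\lambda U$, differentiates it along $V$, swaps $U$ and $V$, subtracts the $\nabla_{[U,V]}Df$ term, and lets the torsion-free and commutator identities cancel the symmetric pieces, exactly as you describe. Your closing observation is also accurate --- the paper's own proof never invokes the concircular hypothesis, so the lemma indeed rests solely on the gradient $m$-quasi Einstein equation.
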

\begin{proof}
Let us assume that the perfect fluid spacetimes (with concircular vector field) with $m$-quasi Einstein metric. Then the equation (\ref{a4}) may be expressed as
\begin{equation}\label{k3}
\nabla_{U}D f+Q U=\frac{1}{m}g(U,D f)D f+\lambda U.
\end{equation}
After executing covariant derivative of (\ref{k3}) along $V$, we get
\begin{eqnarray}\label{k4}
\nabla_{V}\nabla_{U}D f &=& -\nabla_{V}Q U+ \frac{1}{m} \nabla_{V}g(U,D f)D f\nonumber\\&& +\frac{1}{m} g(U,D f)\nabla_{V}D f+ \lambda \nabla_{V}U.
\end{eqnarray}
Exchanging $U$ and $V$ in (\ref{k4}), we lead
\begin{eqnarray}\label{k5}
\nabla_{U}\nabla_{V}D f &=& -\nabla_{U}Q  V+ \frac{1}{m}\nabla_{U}g(V,D f)D f\nonumber\\&& +\frac{1}{m} g(V,D f)\nabla_{U}D f +\lambda \nabla_{U}V
\end{eqnarray}
and
\begin{equation} \label{k6}
\nabla_{[U,V]}D f = -Q[U,V]+ \frac{1}{m}g([U,V],D f)D f+\lambda [U,V].
\end{equation}
Utilizing  (\ref{k3})-(\ref{k6}) and the symmetric property of Levi-Civita connection together
with $R(U,V)D f  =\nabla_{U}\nabla_{V}D f-\nabla_{V}\nabla_{U}D f-\nabla_{[U,V]}D f$, we lead
\begin{eqnarray}
R(U,V)D f  &=& (\nabla_{V}Q)U-(\nabla_{U}Q)V  +\frac{\lambda}{m}\{ (V f)U-(U f)V\}\nonumber\\&&
+\frac{1}{m} \{ (U f)Q V-(V f)Q U \}.\nonumber
\end{eqnarray}
\end{proof}
In view of the equations (\ref{3.2}), (\ref{3.6}) and the above Lemma, we infer
\begin{eqnarray}
\label{kk6}
&&R(U, V)Df=(U \alpha)V-(V \alpha)U+\{(U \beta)A(V)-(V \beta)A(U)\}\rho\nonumber\\&&
+\mu \beta \{A(V)U-A(U)V\}+\frac{\lambda}{m}\{(V f)U-(U f)V\}\nonumber\\&& +\frac{1}{m}\{\alpha (U f)V+\beta(U f)A(V)\rho-\alpha (V f)U-\beta(V f)A(U)\rho\}.
\end{eqnarray}
Taking a set of orthonormal frame field and executing contraction of the equation (\ref{kk6}), we get
\begin{eqnarray}
\label{k8}
S(U, Df)&=&(1-n)(U \alpha)+(U \beta)+(\rho \beta)A(U)\nonumber\\&&
+\mu \beta(n-1)A(U)+\frac{\lambda}{m}(n-1)(U f) \nonumber\\&&
+\frac{1}{m}\{\alpha (U f)+\beta(\rho f)A(U)-n \alpha (U f)+\beta(U f)\}.
\end{eqnarray}
Setting $V=\rho$ in equations (\ref{k8}) and (\ref{3.9}) and then equating the values of $S(\rho, Df)$, we find
\begin{equation}
\label{k9}
(\frac{m}{1-n}+\lambda-\alpha) (\rho f)=m \{(\rho \alpha)-\mu \beta\}.
\end{equation}
If $f$ and $\alpha$ are invariant under the velocity vector field $\rho$, then we find from the foregoing equation that $\beta =0$, since $m\neq 0$.
Thus, following the proof of Theorem 5.1, we conclude our result as:
\begin{theorem}
\label{thm4.1}
 Let a perfect fluid spacetime endowed with concircular vector field admits a gradient $m$-quasi Einstein soliton. If $f$ and $\alpha$ are invariant under the velocity vector field $\rho$, then the spacetime represents a dark energy.
\end{theorem}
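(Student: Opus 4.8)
The plan is to obtain the result as an immediate specialization of the scalar identity (\ref{k9}), which has already been derived by contracting the Lemma~\ref{lem1} formula: one substitutes the explicit expressions (\ref{3.2}) for $Q$ and (\ref{3.6}) for $\nabla Q$ into (\ref{kk6}), contracts over an orthonormal frame to produce (\ref{k8}), and compares the resulting $S(\rho,Df)$ with the value furnished by (\ref{3.9}) at $V=\rho$. I would therefore take (\ref{k9}) as given and work only with it.

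First I would encode the two hypotheses as $(\rho f)=0$ (that is, $f$ invariant under $\rho$) and $(\rho\alpha)=0$ (that is, $\alpha$ invariant under $\rho$). Inserting $(\rho f)=0$ annihilates the entire left-hand side of (\ref{k9}) irrespective of the factor $\frac{m}{1-n}+\lambda-\alpha$, while $(\rho\alpha)=0$ reduces the right-hand side to $-m\mu\beta$. Hence (\ref{k9}) degenerates to $m\mu\beta=0$.

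Next I would rule out the two spurious ways this product could vanish. For a gradient $m$-quasi Einstein soliton as in (\ref{a4}) the parameter $m$ is finite and positive, so $m\neq 0$; and under the standing Agreement the velocity field $\rho$ is a non-trivial concircular field, so its potential function satisfies $\mu\neq 0$, exactly the hypothesis invoked in Section~3. Consequently $\beta=0$ is forced.

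Finally, substituting $\beta=0$ into the state relation $\beta=\kappa(p+\sigma)$ of (\ref{1.7}) yields $p+\sigma=0$, the dark-energy equation of state recalled in the introduction, which is the assertion. I do not anticipate a genuine obstacle: all the analytic work is already packaged inside (\ref{k9}), and the single point needing a word of care is the nonvanishing of $\mu$, namely that the concircular velocity field is assumed non-trivial throughout.
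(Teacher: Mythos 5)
Your proposal is correct and follows essentially the same route as the paper: specialize the contracted identity (\ref{k9}) to $(\rho f)=0$ and $(\rho\alpha)=0$, conclude $\beta=0$, and then read off $p+\sigma=0$ from (\ref{1.7}). In fact you are slightly more careful than the paper, which justifies $\beta=0$ only by ``$m\neq 0$'' while silently using $\mu\neq 0$; you make the nonvanishing of the concircular potential $\mu$ explicit, exactly as it is invoked in Section~3.
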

\section{Conclusion}
In true sense, solitons are nothing but the waves which is physically propagate with some loss of energy and hold their speed and shape after colliding with one more such wave. In nonlinear partial differential equations describing wave propagation, solitons play an important role in the treatment of initial-value problems.\par

In this current investigation, we establish that a perfect fluid spacetime with concircular vector field is a generalized Robertson-Walker spacetime with Einstein fibre. Moreover, we prove that if a perfect fluid spacetime equipped with concircular vector field admits a second order symmetric parallel tensor, then either the state equation of the perfect fluid spacetime is characterized by $p=\frac{3-n}{n-1}\sigma$ , or the tensor is a constant multiple of the metric tensor. Also, different metrics like Ricci soliton, gradient Ricci soliton, gradient Yamabe solitons and gradient $m$-quasi Einstein solitons are studied in the perfect fluid spacetimes with concircular vector field. Specifically, we obtain the condition for which the vector field $\rho$ is steady, expanding and shrinking and observe that the spacetime represents a dark matter era under certain restriction on the vector field $\rho$.

\section{List of Abbreviations}
1.Generalized Robertson-Walker $= (GRW)$.\par
2. Robertson-Walker $= (RW)$.
\section{Declarations}
\subsection{Funding }
Not applicable.
\subsection{Conflicts of interest/Competing interests}
The authors declare that they have no conflict of interest.
\subsection{Availability of data and material }
Not applicable.
\subsection{Code availability}
Not applicable.


\end{document}